\declaretheorem{theorem}
\theoremstyle{plain}
\newtheorem{lemma}[theorem]{Lemma}
\newtheorem{corollary}[theorem]{Corollary}
\theoremstyle{definition}
\newtheorem{remark}[theorem]{Remark}
\DeclareMathOperator*{\argmin}{arg\,min}
\DeclareMathOperator*{\pr}{\mathbb{P}}
\DeclareMathOperator*{\ept}{\mathbb{E}}
\newenvironment{breakablealgorithm}
  {
   \begin{center}
     \refstepcounter{algorithm}
     \hrule height.8pt depth0pt \kern2pt
     \renewcommand{\caption}[2][\relax]{
       {\raggedright\textbf{\ALG@name~\thealgorithm} ##2\par}%
       \ifx\relax##1\relax 
         \addcontentsline{loa}{algorithm}{\protect\numberline{\thealgorithm}##2}%
       \else 
         \addcontentsline{loa}{algorithm}{\protect\numberline{\thealgorithm}##1}%
       \fi
       \kern2pt\hrule\kern2pt
     }
  }{
     \kern2pt\hrule\relax
   \end{center}
  }
\providecommand{\keywords}[1]{\textit{Keywords:} #1}
\title{Distributed Matrix Tiling Using A Hypergraph Labeling Formulation\footnote{This research was supported in part by the  DTIC contract FA8075-14-D-0002/0007}}
\date{}
\author{Avah Banerjee, Guoli Ding,Maxwell Reeser}
\newcommand{\FormatAuthor}[3]{
\begin{tabular}{c}
#1 \\ {\small\texttt{#2}} \\ {\small #3}
\end{tabular}
}
\author{
\begin{tabular}[h!]{lcr}
   \FormatAuthor{Avah Banerjee\footnote{Formerly Indranil Banerjee}}{banerjeeav@mst.edu}{Missouri S\&T\footnote{Part of this research was done while the author was visiting LSU as a postdoc}}
&   \FormatAuthor{Guoli Ding}{ding@math.lsu.edu}{Louisiana State University}
&\FormatAuthor{Maxwell Reeser}{maxwellr96@gmail.com}{Louisiana State University}
\end{tabular}
}
\begin{document}

\maketitle

\begin{abstract}
Partitioning large matrices is an important problem in distributed linear algebra computing (used in ML among others). 
Briefly, our goal is to perform a sequence of matrix algebra operations in a distributed manner (whenever possible) on these large matrices.
However, not all partitioning schemes work well with different matrix algebra operations and their implementations (algorithms). This is a type of data tiling problem.
In this work we consider a theoretical model for a version of the matrix  tiling problem in the setting of hypergraph labeling.
We prove some hardness results and give a  theoretical characterization of its complexity on random instances.
Additionally we develop a greedy algorithm and experimentally  show  its efficacy.  
\end{abstract}

\keywords{tiling, hypergraph coloring, greedy algorithm}

\tableofcontents

\maketitle
\section{Introduction}
Our problem is motivated by the following. Machine Learning and Scientific Computing usually involve linear algebra operations over large matrices and tensors (elements) (\cite{roberts2019tensornetwork,langley1996elements}). 
To achieve scalability, operations involving these elements are usually carried out using distributed algorithms.
If the involved elements are too large to be stored within a single shared memory system, then distribution is the only viable option in most cases.
In this setting, the problem of partitioning data elements across a collection of nodes over which the computation will be carried out emerges as a problem whose solution can yield significant benefits.

First, we give an informal description of the tiling problem.
We consider a user program $\cal P$ as a high-level collection of operations involving large elements. We consider only matrices and vectors; however, our formulation can be extended to higher dimensions without great difficulty.
These operations may be logically dependent, which is given by a dependency graph $\cal G$.
We want to execute the operations (in $\cal P$) in a distributed manner on a set of computational nodes. 
In general, for different operations, we may have one or more distributed algorithms implementing the operation.
For example, suppose we have several different distributed implementations of matrix multiplication, which takes two input matrices and returns their product.
This operation can be implemented using multiple distributed algorithms (e.g., Cannon's Algorithm, Distributed Stressen's \cite{ballard2012communication}, PUMMA \cite{choi1994pumma} etc.)  each may prefer a different type of partitioning scheme for the matrices involved. 
An element may participate in multiple operations, and each operation may introduce a different set of constraints on the preferable partition of the element.
Considering the matrix example again, suppose a matrix $A$ is involved in two different operations: $C = \mathsf{mul}_{cannon}(A,B)$ and $D = \mathsf{inv}_{gj}(A)$.
Further, suppose multiplication has been implemented using Cannon's algorithm, which prefers that the matrices be partitioned block-wise.
On the other hand, a matrix inversion using Gauss-Jordan may prefer the matrix $A$ to be distributed as blocks of columns (column tiling).
Unless we want to keep multiple copies of $A$, the choice of the partitioning scheme will affect the performance of different operations involving $A$. 
This example leads us to a natural optimization problem: given a collection of operations, determine an optimal partitioning scheme for the elements to minimize the communication cost. 

\subsection{Problem Formulation}
In this section, we describe some elements of our model at a high level. In subsequent sections, we adapt it based on the specific result we seek.
We often use the phrase ``user program" to indicate a collection of possibly dependent high-level operations. Abstracting away local operations, external memory read-write, etc. We only concern ourselves with operations in the program involving the distributed matrices.
However, our optimization framework is fairly generic.

\subsubsection{Partitioning Schemes}
First, we discuss the type and the degree of granularity in the partitioning scheme that we consider. 
In general, a collection of matrices (either sparse or dense) can be considered as a  hypergraph where the elements of the matrices are vertices, and an edge indicates if the elements are involved in some operation (here operation refers to atomic operations like sum, comparisons, etc.).
Hence a collection of matrices and dependent expressions gives way to a set of hypergraphs, and the goal is to find an optimal $p$-partition (where $p$ is the number of processing nodes) that minimizes the total number of cut-edges. 
Hypergraph partitioning has been used extensively for partitioning data or the computation (\cite{karypis1999multilevel,ballard2015brief,devine2006parallel}).
This problem is approximation-hard and various heuristic based solvers used in practice are best suited when dealing with one such graph at a time.
Further, determination of the exact communication pattern (and thus the edges) may be non-trivial.

On the other hand, most distributed matrix algebra algorithms use some type of block decomposition (especially for dense matrices). 
Thus it makes sense to look at the partitioning scheme at a higher level, which we call \emph{tilings}.
As an example in figure~\ref{fig1: tiling} three commonly used tilings are shown. A tiling need not be contiguous or necessarily disjoint, and as such, there can be many different tiling types (a parameter of our model discussed later). 

\begin{figure}[ht]
\centerline{\includegraphics[scale=1.4]{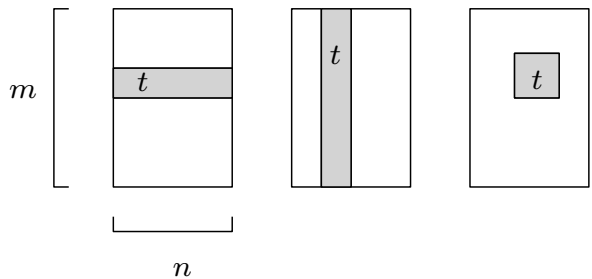}}
\caption{A \emph{tile} is highlighted as the shaded rectangular region. Three types of tilings. From left to right order: row ($r$), column ($c$) and block ($b$).}
\label{fig1: tiling}
\end{figure} 

\subsubsection{Operations}
The second element in our model is the matrix algebra operations.
They are encapsulated at a high level as expressions like $A = \mathsf{mul}(B, C)$. These are the ``atomic expressions" in our model.
So an expression like the following is a composite expression:
\begin{align}\label{eq: exp ex}
   A = \mathsf{sum}(B, C, \mathsf{mul}(D,E,F^T)), 
\end{align}

\begin{figure}[ht]
\centerline{\includegraphics[scale=1.]{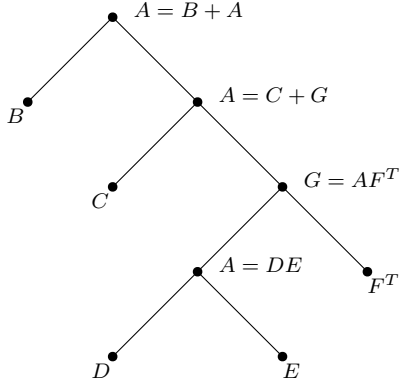}}
\caption{A possible computation DAG corresponding to the expression in Eq~\ref{eq: exp ex}. Here $G$ is an additional matrix to hold the intermediate result $\mathsf{mul}(D, E, F^T)$.}
\label{fig: e-tree}
\end{figure} 
\noindent where $F^T$ is the transpose of $T$.
The above expression does not immediately tell us how we should go about computing the product $DEF^T$.
Interpreting this as $D(EF^T)$ is not the same as $(DE)F^T$ in terms of the number of arithmetic operations used,
since different parenthesization of the matrices in the product term may lead to a differing number of arithmetic operations to compute the final product. This is another optimization issue\footnote{We can solve this easily using dynamic programming formulation.} separate from the partitioning problem.
Also, note that the expression does not explicitly tell us where/how to store this temporary product. 
For example, if the result matrix $DE$ will be used later in some other expression, it may be a good idea to store it as a separate matrix.
This is another orthogonal optimization problem.
To avoid ambiguities in classifying expressions, we consider an abstraction based on hypergraphs (introduced in section 4). However, to prove a lower bound, a simpler model using graphs is considered (section 3).
A possible computation DAG of the expression given in Eq.~\ref{eq: exp ex} is shown in figure~\ref{fig: e-tree}.
Even in the hypergraphic model, we use a partial order to encode the dependency of the expressions similar to using a computational DAG.

\subsubsection{Cost Model}
A solution to our partitioning problem is a tiling of the matrices.
There are various ways to define a cost function based on the communication complexity of the tiling.
A choice of tiling may affect the performance of a distributed algorithm in a non-trivial way.
In most cases, this would require experimental evaluations.
We decouple our cost model from specific system architectures by only considering the abstract cost of the number of \emph{retiling} operations.
Where Retiling is the operation of changing the current tiling of the matrix to meet the algorithm's requirements in the implementation.
\footnote{Alternatively, we may think of this as the cost of accessing non-local memory as if the matrix has been tiled correctly.}
That is, given a tiling of the matrices, we determine the number of instances in which a matrix is not tiled according to the specification of the operation.
If the matrices have unequal sizes, we can use weights to scale the retiling cost accordingly.
In summary, we consider a data partitioning problem on a collection of distributed matrix algebra operations to minimize overall communication, which is approximated by the retiling cost.

\section{Related Work}
The model which is closest to ours\cite{huang2015spartan} introduces a distributed array framework that tries to optimize the tiling (defined at a high level, similar to ours) during runtime. In \cite{zhang2016measuring} the authors develop an array-based distributed framework that tries to optimize the computation DAG to minimize both computation and communication. In \cite{gu2017improving} the authors specifically focused on optimizing matrix multiplication to improve concurrency. Using the Legion programming model\cite{bauer2012legion} the authors in \cite{bauer2019legate} describe a distributed array framework for the popular Numpy Python library. Lastly, the theoretical model we proposed here and our experimental results are currently being adapted to a distributed array processing framework \footnote{ reference redacted in this review copy.}. 

\section{A Signed Graph Model and Approximation Hardness}
In this section, we consider a simpler model to prove the hardness of our tiling problem. In \cite{huang2015spartan} authors gave a similar result showing that their tiling problem is NP-complete (by a reduction from not-all-equal SAT). However, we use a different reduction which is approximation-preserving. This helps us establish an approximation hardness result assuming that the Unique Games Conjecture (\cite{khot2002power,khot2005unique}) is true.

Here we assume that the user program is given as a  directed acyclic graph (DAG).
A program ${\cal P}(V, E)$ is given by an ordered sequence of expressions $E = (e_1,\ldots,e_m)$ along with a set of matrices $V$ ($|V|=n$) \footnote{Later in section 4 we will treat the expressions as edges of a hypergraph.}\footnote{ In what to follow, we will use the terms ``expression" and ``edge" interchangeably. Similarly, we will use the terms ``matrix" and ``vertex" interchangeably. }.
Dependencies are inferred from the ordering of the expressions.
Additionally, we are given a subset $O \subset V$  of output matrices.
These are the matrices that stay in memory until the end of the program execution.
Next, we make an important assumption: each matrix appears at the left-hand side (the output) of an expression at most once. 
Consider the tree in figure ~\ref{fig: e-tree} which corresponds to the following sequence of expressions: $$e_1 (A = DE), e_2 (G = AF^T), e_3(A = C+G), e_4(A=A+B).$$
After the execution of the expression $e_3$, $A$ holds the result of $(C + DEF^T)$ and logically this matrix is different from the $A$ used in $e_1$ and $e_2$.
We can make the case that this matrix is different from the previous $A$.
This implies that it may have a different tiling without incurring any additional cost. Thus we could rewrite the above expressions as:
$$e_1 (A = DE), e_2 (G = AF^T), e_3(H = C+G), e_4(I = H+B).$$

\noindent Note that this does not increase the memory requirement since we can always ``forget" any unused matrices that are not in $O$.
Making these restrictive assumptions on the model only makes our hardness result stronger.

\subsection{The Binary Tiling Problem}
Now we are ready to define the problem formally.
We restrict expressions to only allow at most three matrices (e.g.  $A = \mathsf{sum}(B)$ is allowed but $A = \mathsf{sum}(B,C,D)$ not). Cost of an expression is either 1 (if tilings are sub-optimal) or 0 (otherwise).
As an example, let $A = \mathsf{sum}(B,C)$. Say we assume the $\mathsf{sum}$ operation prefers all matrices to have the same tiling (since it is an elementwise operation).
If $B$ and $C$ have different tilings in the solution $S$, say one is row-wise, and the other is column-wise, then a unit of cost is incurred.
Further, we assume there are only two types of tilings (say row-wise and column-wise).
We will refer to this problem as the binary tiling problem, which is formally defined below.
\footnote{The qualifier ``binary" refers to the fact that we only allow two tiling types.}
Two variants are considered to give a separation-type result.
We only allow $A = B$ and $A = B^T$ types of expressions for the first type. This problem is denoted by $\mathsf{BTP}_T$, where $T$ stands for transpose.
For the expression $A = B$, the communication cost is 0 if both matrices have the same tiling. 
On the other hand, for the expression $A = B^T$, the matrices must have differing tilings.
The input size is the number of matrices ($n$) + number of expressions ($m$).
We show $\mathsf{BTP}_T$ has a polynomial-time (in fact linear) algorithm.
For the second type, we also allow the $\mathsf{sum}$ operator (denoted by $\mathsf{BTP}_{T,+}$). 
This simple modification makes the problem approximation hard.

\begin{figure}[h]
\centerline{\includegraphics[scale=1.]{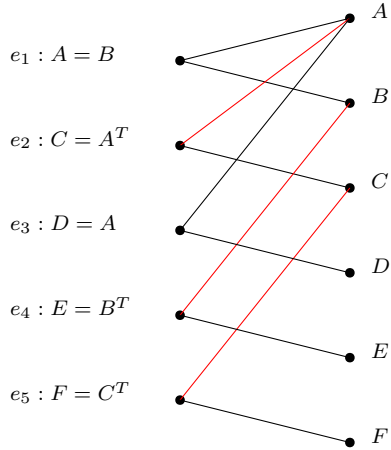}}
\caption{The graph $G$ (right) corresponding to the program given by the expressions (left). Red edges indicate tilings must be different.}
\label{fig: bipart}
\end{figure} 


\begin{theorem}
 $\mathsf{BTP}_T$ can be solved in linear time.
\end{theorem}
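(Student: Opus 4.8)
The plan is to recast $\mathsf{BTP}_T$ as a two-coloring problem on a signed graph and then exploit the acyclic, single-output structure of the program. Identify the two tiling types with two colors and build a graph whose vertices are the matrices in $V$; each expression $e_i$ becomes an edge on its two matrices, labelled ``$=$'' (the endpoints should receive the \emph{same} color) when $e_i$ is of the form $A=B$, and labelled ``$\neq$'' (the endpoints should receive \emph{different} colors) when $e_i$ is $A=B^T$. The cost of a tiling is then exactly the number of edges whose label is violated, so solving $\mathsf{BTP}_T$ is the same as finding a coloring that minimizes the number of unsatisfied $=/\neq$ constraints.

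First I would establish the key structural fact: this constraint graph is a \emph{forest}. Orient each edge from the input (right-hand side) matrix toward the output (left-hand side) matrix; since every matrix appears on a left-hand side at most once, every vertex has in-degree at most one. If the underlying undirected graph contained a cycle, the in-degree bound would force each of its vertices to have exactly one incoming cycle-edge, i.e.\ the cycle would be consistently oriented and hence a directed cycle, which is impossible because the dependencies form a DAG. Hence no cycles exist and the graph is a forest. It is precisely here that allowing only the single-input operations $A=B$ and $A=B^T$ matters: an expression such as $A=\mathsf{sum}(B,C)$ would bind three matrices together and destroy the forest structure, which is what later makes $\mathsf{BTP}_{T,+}$ hard.

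Because a signed forest has no cycles, it carries no parity obstruction, so \emph{every} $=/\neq$ constraint can be satisfied simultaneously and the optimum cost is $0$. To realize this in linear time I would process the expressions in the given order, which is a valid topological order of the dependency DAG: assign an arbitrary tiling to each input matrix when it is first encountered, and when the expression defining $A$ from $B$ is reached, set the tiling of $A$ equal to (for $A=B$) or opposite to (for $A=B^T$) the already-fixed tiling of $B$. Each matrix is defined by exactly one expression, so this assignment is never overwritten and satisfies that expression's constraint on the spot, and all $m$ constraints are met. Touching each vertex and edge a constant number of times yields running time $O(n+m)$. The only points that need care are verifying that the supplied ordering really is topological, so that an input is always colored before it is used, together with the forest argument above; conflicting demands on a single matrix are ruled out directly by the at-most-one-output assumption.
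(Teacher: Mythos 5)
Your proposal is correct and follows essentially the same route as the paper: both reduce $\mathsf{BTP}_T$ to a constraint graph on the matrices, show it is a forest using the at-most-one-LHS assumption, and then propagate tilings from the roots in a single linear-time pass with no backtracking. The only cosmetic differences are that the paper detours through a bipartite incidence graph on $V \cup E$ before contracting to the tree on $V$, and proves acyclicity by a pigeonhole argument rather than your (arguably cleaner) in-degree-at-most-one plus DAG-orientation argument.
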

\begin{proof}
Consider an input ${\cal P}(V, E)$ to $\mathsf{BTP}_T$.
We can create a bipartite graph $G$ from ${\cal P}(V, E)$ as follows.
The vertex set of $G$ is $V \cup E$. 
There is an edge between $e \in E$ and $A \in V$ if and only if the matrix $A$ is in the expression $e$ (see Fig.~\ref{fig: bipart}).
Note that a matrix never appears more than once on the LHS of an expression and if it is in the LHS of some expression then it must be the first time that matrix appeared in any expression.
Hence $G$ is a tree.
Otherwise, for the sake of contradiction, assume there is some cycle involving the matrices $(A_{i_1},\ldots,A_{i_t})$.
Since there are $t$ expressions there are exactly $2t$ slots, one left and one right for each expression for us to put these matrices. 
Further, each expression must contain two different matrices.
Hence for any ordering of the expressions and assignment of the matrices the matrix appearing in the RHS of the first expression must appear on the LHS of some later expression; due to the pigeonhole principle.
This contradicts our earlier assumption.

\begin{figure}[h]
\centerline{\includegraphics[scale=1.]{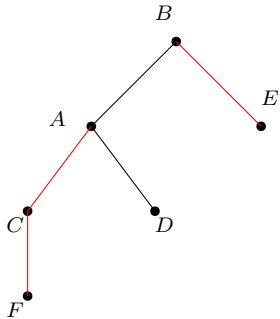}}
\caption{The tree $T$ from the graph $G$ in Fig.~\ref{fig: bipart}}
\label{fig: d tree}
\end{figure} 

Now that we have determined $G$ is a tree it is easy to come up with an algorithm  to optimally tile all the matrices.
Consider the tree $T$ on the vertex set $V$ created from $G$ by adding an edge between a pair of matrices if they were in some expression.
This tree is shown in figure ~\ref{fig: d tree} corresponding to the graph $G$.
We decide a tiling for the root and proceed downward to its children.
Since there are no cycles we never have to backtrack.
Clearly this can be done in linear time.
\end{proof}

As a corollary to the above we see that this restricted tiling problem is solvable in polynomial time as long as $G$ is a tree even with more than two tiling types.
However $G$ must satisfy the condition that a matrix appears at the LHS of an expression exactly once.

\begin{theorem}\label{thm: npc}
 $\mathsf{BTP}_{T,+}$ is NP-Complete.
\end{theorem}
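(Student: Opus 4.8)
The plan is to show that $\mathsf{BTP}_{T,+}$ lies in NP and is NP-hard, the latter by a reduction from \textsc{Max-Cut}. Membership in NP is immediate: a candidate solution is an assignment of one of the two tiling types to each of the $n$ matrices, i.e.\ a binary $2$-colouring, and given such an assignment the total cost is obtained in $O(m)$ time by checking each expression independently; hence ``does there exist a tiling of cost at most $k$?'' is polynomial-time verifiable. For hardness I would reduce from the decision version of \textsc{Max-Cut}: given a graph $H=(V_H,E_H)$ and an integer $k$, decide whether $H$ admits a cut of size at least $k$. The guiding intuition is that a binary tiling is exactly a $2$-colouring of the matrices, that the expression $A=B^{T}$ rewards $A$ and $B$ for receiving \emph{different} colours, and that $A=\mathsf{sum}(B,C)$ rewards $A,B,C$ for receiving the \emph{same} colour; chaining a transpose into a sum therefore lets me reward a \emph{pair} of vertices for disagreeing, which is precisely a cut edge.

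Concretely, for each vertex $v\in V_H$ I would introduce one \emph{input} matrix $X_v$, i.e.\ a matrix that never appears on a left-hand side, so that its tiling is a free binary choice and, crucially, it may be reused on right-hand sides arbitrarily many times. For each edge $\{u,v\}\in E_H$ I would introduce two fresh matrices $W_{uv},Z_{uv}$ together with the two expressions $W_{uv}=X_u^{T}$ and $Z_{uv}=\mathsf{sum}(W_{uv},X_v)$, placing all ``$W$''-expressions before all ``$Z$''-expressions. This is a legal instance: every matrix occurs on a left-hand side at most once and only at its first occurrence, so the use-after-definition and left-hand-side-once restrictions hold. Note that it is exactly the second right-hand slot of $\mathsf{sum}$ that allows the shared inputs $X_v$ to participate in many constraints and thereby create the cycles that the tree argument of the preceding theorem forbids for $\mathsf{BTP}_T$; this is where the added $\mathsf{sum}$ operator does the real work.

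The heart of the argument, which I expect to be the main obstacle, is the exact cost accounting. I would show that the optimum cost of the constructed instance equals $|E_H|$ minus the maximum cut size of $H$. Fix any colouring of the inputs. The defining expression $W_{uv}=X_u^{T}$ can always be satisfied at zero cost by giving $W_{uv}$ the colour opposite to $X_u$, and once this is done $Z_{uv}=\mathsf{sum}(W_{uv},X_v)$ is satisfiable at zero cost precisely when $X_v$ is the colour opposite to $X_u$, i.e.\ precisely when the edge $\{u,v\}$ is cut. The one delicate point is to rule out an optimal solution that \emph{violates} a defining expression to gain elsewhere: setting $W_{uv}$ to agree with $X_u$ costs $1$ and can save at most $1$ on the single sum in which $W_{uv}$ occurs, so it is never strictly beneficial. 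Consequently each edge gadget contributes exactly the indicator that $\{u,v\}$ is monochromatic, and summing over edges gives
\[
\text{(optimal tiling cost)} \;=\; |E_H| - \text{(maximum cut size of }H).
\]
Thus the instance admits a tiling of cost at most $|E_H|-k$ if and only if $H$ has a cut of size at least $k$, which establishes NP-hardness and, with membership in NP, completes the proof.

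Finally, I would phrase the gadget with the stronger guarantee in mind: since the correspondence above is an \emph{exact} additive relation between tiling cost and cut size rather than a mere decision equivalence, the very same construction is an approximation-preserving (linear-type) reduction. This is what will later let me transfer Unique-Games-based inapproximability for \textsc{Max-Cut} to $\mathsf{BTP}_{T,+}$, so I would prove the cost identity in full rather than only the threshold statement.
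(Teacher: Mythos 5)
Your proof is correct, but it takes a genuinely different route from the paper's. The paper reduces from the minimization version of the Balanced Subgraph problem: it first shows min-$\mathsf{BSP}$ remains hard on $2$-degenerate graphs (by subdividing each edge through a fresh vertex), and then converts the $2$-degeneracy ordering directly into a program in which each matrix is defined by one expression over its at most two left neighbours, with $=$/$\ne$ edge signs realized by the presence or absence of a transpose. You instead reduce from \textsc{Max-Cut} using a per-edge gadget $W_{uv}=X_u^{T}$, $Z_{uv}=\mathsf{sum}(W_{uv},X_v)$ with fresh auxiliaries $W_{uv},Z_{uv}$; your cost accounting is sound (in particular, violating the defining transpose expression saves at most what it costs, and the identity $\mathrm{cost}=|E_H|-\mathrm{cut}$ holds whether one charges the $\mathsf{sum}$ expression as a single $0/1$ constraint or as two pairwise constraints), and the instance respects the LHS-once restriction. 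Your gadget is arguably more self-contained since it avoids the degeneracy detour, while the paper's route buys the downstream corollary for free: its chain $\mathsf{BSP}\le\mathsf{BSP}_{\mbox{2-degn}}\le\mathsf{BTP}_{T,+}$ is cost-preserving from min-$\mathsf{BSP}$, to which the $O(\log n)$ UGC-hardness of Avidor and Langberg applies directly. One caution on your closing remark: your exact identity relates the tiling cost to the number of \emph{uncut} edges, so the inapproximability you can transfer is that of \textsc{Min-UnCut} (a special case of min-$\mathsf{BSP}$ with all $\ne$-edges, also UGC-hard to approximate within any constant), not the multiplicative-gap hardness of \textsc{Max-Cut} as a maximization problem; phrased via \textsc{Min-UnCut} your reduction does support an inapproximability corollary, though the constant-factor bound it yields this way is weaker than the $O(\log n)$ bound the paper gets from min-$\mathsf{BSP}$. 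None of this affects the NP-completeness claim itself, which your argument establishes.
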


\begin{proof}
Proving the problem is in NP is trivial so we only prove that it is NP-hard by a reducing  from the Balanced Subgraph problem \cite{huffner2007optimal}.

Assume each expression is of the following form : $A = \mathsf{sum}(B^{\pm T}, C^{\pm T})$, where $A^{\pm T}$ denotes either $A$ or $A^T$.
We create a graph $G$ from the expressions as follows.
The vertex set of $G$ is the set of matrices in $V$.
For each expression $A = \mathsf{sum}(B^{\pm T}, C^{\pm T})$ we create two edges.
One between $A$ and $B$ and another between $A$ and $C$.
Additionally, we add signs to these edges.
An edge  has the sign ``$=$" if both $A$ and $B$  are in standard form.
If $B$ is in transpose form then we put the ``$\ne$" sign on the corresponding edge.
The graph $G$ formed this way is 2-degenerate.
In a 2-degenerate graph there is an ordering of the vertices such that every vertex has at most 2 neighbors to its left in the ordering.
We can order the vertices in $G$  as follows.
Let $\mathcal{B}$ be the set of matrices that occur only in the right hand side of an expression.
Let $\mathcal{A}' = V \setminus \mathcal{B}$.
Note that each matrix in $\mathcal{A}'$ corresponds to the expression in which it first occurs (in the LHS).
In our ordering we put the matrices in $\mathcal{B}$ first (in any order) then put the matrices in $\mathcal{A}'$ according to the order of the expressions they first appear.
Since each expression has at most 2 matrices in the RHS it is easy to see that this ordering shows $G$ is 2-degenerate.

$\mathsf{BTP}_{T,+}$ can be restated as a problem of determining a tilling assignment of the vertices in $G$ such that for each $=$-edge the tiling of the incidence vertices match and for each $\ne$-edge the tilings are different.
Then the objective is to find a tiling of the vertices that minimizes the number of unsatisfied edges. Where an edge is said to be unsatisfied if the tilings of its incident vertices do not match with the sign of the edge.
We show this problem to be equivalent to the minimization version of the balanced subgraph problem ($\mathsf{BSP}$) (\cite{huffner2007optimal,dasgupta2007algorithmic}) on a 2-degenerate graph which is defined next.

In $\mathsf{BSP}$ we are given an undirected graph $G(V, E)$ for which we need to find a bi-coloring of the vertices. Associated with each edge is a constraint $=$ or $\ne$. A $=$-edge ($\ne$-edge) is satisfied if it incident vertices have the same (different) color(s).
The goal is to find a coloring that minimizes the number of unsatisfied edges\footnote{In literature some authors uses an alternate but equivalent formulation without using colors on the vertices. Instead a {\it resigning} operation is defined on the vertices which flips the edge types of all the edges incident to the said vertex. Then the goal is to find a sequence of resignings so that the number of edges of the minority type is minimized.}.
A graph is \emph{balanced} if there is a bi-coloring that satisfies all the edges.
The decision problem is to determine for a given $k$ if there is a  bi-coloring such that at-most $k$ edges remain unsatisfied.
This problem is NP-complete(\cite{huffner2007optimal,agarwal2005log}). This is true even for $2$-degenerate graphs which we show next. Any graph $G$ can be transformed to a 2-degenerate graph $G'$ as follows. For each edge in $G$ create a new vertex in $G$ and delete the edge. Then make the new vertex adjacent to the two vertices incident to the deleted edge. 
It is an easy exercise to show that $G'$ is 2-degenerate.
If the deleted edge was a $=$-edge then the two newly created edges are made $=$-edges. Otherwise we make one of the edges a $\ne$-edge arbitrarily. 
We claim that any solution to the  minimum $\mathsf{BSP}$ problem in $G'$ immediately gives a solution to the minimum $\mathsf{BSP}$ for $G$ of the same value $k$.
Let $V'$ the set of new vertices in $G'$ (they replaced the original edges of $G$).
Suppose $C': V\cup V' \to \{0,1\}$ is an optimal bi-coloring on $G'$ that leaves $k$ edges unsatisfied.
Since $C'$ is optimal it cannot leave both edges incident to a vertex in $V'$ unsatisfied. Since we can always flip the color of that vertex to satisfy the two edges incident to it.
This implies that if we restrict $C'$ to $V$ then it induces a coloring on $G$ which also leave $k$ edges unsatisfied. To prove the other direction suppose $C: V \to \{0,1\}$ is a bi-coloring on $G$. We can extend $C$ to create a bi-coloring $C''$ on $G'$ as follows. We let $C''(v) = C(v)$ if $v \in V$ otherwise we let $C''(v) = C(u)$ where $uv$ is a $=$-edge. This ensures that in $G'$ exactly $k$ edges remain unsatisfied. 


To complete the proof we need to reduce  $\mathsf{BSP}_{\mbox{2-degn}}$ to $\mathsf{BTP}_{T,+}$.
Create a matrix for each vertex in $G$. Let $v_1,\ldots, v_n$ be an ordering of the vertices according to the 2-degeneracy structure of $G$.
For each vertex $v_i$ which has a single neighbor $v_j$ ($j < i$) create an expression $A_{i} = A_j$ or $A_i = A_j^T$ depending on whether the sign of the edge is either $+$ or $-$ respectively. Similarly we can deal with case where $v_i$ has two left neighbours. Further it can be easily shown that  $\mathsf{BSP}_{\mbox{2-degn}}$ has a bi-coloring with $k$ unsatisfied edges if and only if $\mathsf{BTP}_{T,+}$ has a tiling with cost $k$.
\end{proof}
 
\begin{corollary}
Assuming the unique games conjecture there are no approximation algorithms for  $\mathsf{BTP}_{T,+}$ with an approximation ratio better than $O(\log n)$.
\end{corollary}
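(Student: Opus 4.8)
The plan is to derive the corollary directly from Theorem~\ref{thm: npc} by upgrading the reduction proved there from a mere Karp reduction to an approximation-preserving one, and then importing a known UGC-based inapproximability bound for the Balanced Subgraph Problem. First I would verify that the reduction is cost-exact. It proceeds in two stages: (i) the edge-subdivision gadget turning an arbitrary signed graph $G$ into a $2$-degenerate signed graph $G'$, and (ii) the encoding of a $2$-degenerate $\mathsf{BSP}$ instance as a $\mathsf{BTP}_{T,+}$ program. Stage (ii) is already an exact bijection on solution values ($k$ unsatisfied edges $\leftrightarrow$ cost $k$). For stage (i), the proof of Theorem~\ref{thm: npc} shows that an optimal coloring of $G'$ never leaves both edges incident to a subdivision vertex unsatisfied, and that any coloring of $G$ extends to $G'$ at the same value; hence $\OPT(G') = \OPT(G)$. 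Composing the two stages, $\OPT$ is preserved verbatim.

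Second, I would control the instance size so that the $\log$ factor survives. If $G$ has $N$ vertices and $M \le N^2$ edges, then $G'$ (and hence the $\mathsf{BTP}_{T,+}$ program) has $n = N + M = \poly(N)$ matrices, so $\log n = \Theta(\log N)$. Since the reduction preserves $\OPT$ exactly, any polynomial-time algorithm that approximates $\mathsf{BTP}_{T,+}$ within a ratio $\rho(n)$ yields, by mapping its solution back through the bijection of stage (ii) and the flip argument of stage (i), a $\mathsf{BSP}$ solution on $G$ within the same ratio. In particular an $o(\log n)$-approximation for $\mathsf{BTP}_{T,+}$ would give an $o(\log N)$-approximation for $\mathsf{BSP}$.

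Third, I would invoke the hardness of $\mathsf{BSP}$ itself. Using the UGC-based inapproximability of the minimum balanced-subgraph problem (equivalently \textsc{MinUnCut}, i.e.\ $\mathrm{Min}\text{-}2\mathrm{Lin}(2)$) recorded in \cite{agarwal2005log} together with \cite{khot2002power,khot2005unique}, no polynomial-time algorithm approximates $\mathsf{BSP}$ within $o(\log n)$ under the Unique Games Conjecture. Combined with the cost-exact, polynomially-sized reduction above, this rules out any $o(\log n)$-approximation for $\mathsf{BTP}_{T,+}$, which is the assertion of the corollary.

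The main obstacle is not conceptual but bookkeeping: one must confirm that exactness (not just approximate preservation) of $\OPT$ holds across both stages, since only a cost-exact reduction with polynomial size growth lets a superconstant hardness factor---rather than mere NP-hardness---transfer; and one must pin down the precise quantitative form of the imported $\mathsf{BSP}$ hardness so that its $\log$ in the hypothesis matches the $\log n$ claimed in the conclusion. The flip argument of stage (i) is exactly what secures the first of these points.
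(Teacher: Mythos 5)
Your proposal is correct and follows essentially the same route as the paper: observe that the two reductions $\mathsf{BSP} \le \mathsf{BSP}_{\mbox{2-degn}} \le \mathsf{BTP}_{T,+}$ from Theorem~\ref{thm: npc} preserve solution cost exactly (and blow up the instance only polynomially), and then import the UGC-based inapproximability of minimum $\mathsf{BSP}$. The only quibble is the citation for that last step: the hardness result should be attributed to \cite{avidor2007multi} rather than \cite{agarwal2005log}, which gives the matching \emph{algorithmic} upper bound; your added bookkeeping about exactness and size is implicit but not spelled out in the paper's one-line argument.
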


\begin{proof}
The two reductions ($\mathsf{BSP} \le \mathsf{BSP}_{\mbox{2-degn}}\le \mathsf{BTP}_{T,+}$) in Theorem \ref{thm: npc} preserve the size (cost) of the solutions and hence are also approximation preserving.
Then the lower bound follows from the result of
\cite{avidor2007multi} for the minimum $\mathsf{BSP}$ assuming the unique games conjecture \cite{khot2002power}.
\end{proof}

 Another observation of note is that the graph $G$ in the above construction is at most 2-connected (since the rightmost vertex has degree at most 2). This, along with the previous theorem, gives a sharp characterization of our tiling problem with respect to the connectivity of $G$.

\section{Tiling as Hypergraph Labeling}
In this section, we consider a more general formulation of the tiling problem using hypergraphs. This allows us to state an interesting result on the complexity of the problem for random instances. Further, in the following section, we extend this model to develop a greedy algorithm.

Authors in \cite{huang2015spartan} studied the performance of their tiling solver on several randomly generated programs.
However, we suspect that random programs (appropriately defined) may be over-constrained and easier to optimize. Specifically, a random solution may be close to an optimal one. We formally prove this fact in the hypergraph setting introduced in this section. 


Let $H(V, E)$ be a hypergraph whose vertices represent matrices and edges represent expressions. As usual we take $|V| = n, |E| = m$.
We assume $H$ to be $k$-uniform.
Now we define the tiling problem a bit differently.
We do not assume any order on the edges (this does not necessarily make the problem easier). 
There may be one or more algorithms that we can use to execute the expression for each expression.
Each algorithm may have one or more preferred choices of tilings for the matrices involved in the expression. All these preferred choices can be expressed as a constraint on the labeling.
Specifically, we keep a set $L(e)$ for each edge $e \in E$, which is the union of all the preferred tiling configurations of the algorithms that can execute the expression corresponding to the edge.
Suppose we allow at most $\tau$ different tiling types. For example, if we only consider row and column tiling, then $\tau = 2$.
Then each $L(e)$  is a non-empty $\subset [\tau]^k$.
We also use a parameter $s \ge 1$ to denote the number of preferred labelings per edge ($|L(e)| = s$).

Given $(H, L)$ with parameters $k, s$ the optimization problem is to find a labeling $S$ such that:
\begin{align*}
   S \in \argmin_{X \in [\tau]^V} \sum_{e \in E} \left ( \min_{l \in L(e)} d(X,l)  \right)    
\end{align*}

where $d(X,l)$ is defined as follows.
Let $X[v]$ be the label assigned to the vertex $v$. Similarly we define $l[v]$ as the feasible label of the vertex $v \in e$ given by the constraint $l \in L(e)$.
Then
$$d(X,l) = \sum_{v \in e}(1-\delta_{X[v]l[v]})$$
Here $\delta_{ij} = 1$ iff $i=j$ and 0 otherwise.
Hence $d(\cdot,\cdot)$ is  the Hamming distance over the alphabet $[\tau]$. We call this the Constrained Hypergraph Labeling Problem ($\mathsf{CHLP}(H,L)$).

It is an easy observation that the decision version of the problem is $NP$-complete by a reduction from $3\mathsf{SAT}$ with $\tau = 2$. We leave the details as an exercise to the reader. Corollary to this is that verifying whether the optimal cost is 0 is also NP-complete, and hence there is no approximation algorithm with a bounded approximation ratio.

We describe a simple randomized algorithm and show that it achieves a bounded approximation ratio in expectation for a randomly (defined later) generated instance of the problem.
The randomized algorithm, unsurprisingly, is the one that assigns each vertex a label uniformly and independently at random.

\begin{lemma}\label{lmm: upper bound}
Expected cost of the randomized algorithm for any instance of $\mathsf{CHLP}(H,L)$ with parameter $k, s$ is $O(m)$. The result hold with high probability.
\end{lemma}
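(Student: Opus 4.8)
The plan is to treat the two assertions separately: first the bound on the expected cost by linearity of expectation, then the high-probability statement by a concentration argument over the independent vertex labels. For the expectation, I would fix, for each edge $e$, an arbitrary representative $l_e \in L(e)$ (this is where the parameter $s$ drops out: enlarging $L(e)$ only lowers the cost). Since the per-edge cost is a minimum over $L(e)$, we have the pointwise domination $\min_{l \in L(e)} d(X,l) \le d(X, l_e)$, so $\mathrm{cost}(X) \le \sum_{e} d(X, l_e)$. As $X$ is a uniform element of $[\tau]^V$ with independent coordinates, for each $v \in e$ we have $\Pr[X[v] \ne l_e[v]] = 1 - 1/\tau$, and linearity gives
\[
   \E[\mathrm{cost}] \;\le\; \sum_{e \in E} \sum_{v \in e} \Pr[X[v] \ne l_e[v]] \;=\; m\,k\,\bigl(1 - \tfrac{1}{\tau}\bigr).
\]
Since $k$ and $\tau$ are fixed parameters, this is $O(m)$. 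I expect this step to be routine; the only real content is that replacing the minimum by a fixed $l_e$ is exactly what decouples the edge contributions into independent coordinate comparisons.

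For the high-probability statement, the first (and cheapest) observation is that each per-edge cost is \emph{deterministically} bounded, $\min_{l \in L(e)} d(X,l) \le k$, because a Hamming distance over the $k$ coordinates of $e$ never exceeds $k$; hence $\mathrm{cost}(X) \le km$ always, so the cost is $O(m)$ with probability $1$. To obtain genuine concentration about the mean I would apply McDiarmid's bounded-differences inequality to $f(X)=\mathrm{cost}(X)$, viewed as a function of the independent labels $(X[v])_{v\in V}$. Changing the label of a single vertex $v$ affects only the edges containing $v$, and for each such edge the value of $\min_{l} d(X,l)$ moves by at most $1$ (each $d(X,l)$ changes in a single coordinate, so their minimum moves by at most $1$); thus $f$ has bounded difference at most $\deg_H(v)$ in coordinate $v$. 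McDiarmid then yields
\[
   \Pr\bigl[\,f \ge \E[f] + t\,\bigr] \;\le\; \exp\!\left(-\frac{2t^2}{\sum_{v} \deg_H(v)^2}\right),
\]
and, using $\sum_v \deg_H(v)=km$ together with $\sum_v \deg_H(v)^2 \le \Delta\cdot km$ (where $\Delta$ is the maximum degree) and taking $t = \Theta(m)$, the failure probability is exponentially small whenever $\Delta$ is bounded (more generally whenever $\Delta = o(m/\log n)$), giving $\mathrm{cost} = O(m)$ with high probability.

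The hard part will be reconciling a \emph{sharp} concentration claim with the ``for any instance'' phrasing, and this is where the dependence between edges sharing a vertex bites: the per-edge costs are not independent, so a direct Chernoff/Hoeffding bound is unavailable, and a single very-high-degree vertex shows that $f$ need not concentrate tightly around $\E[f]$ in the worst case — its distribution can split into two well-separated clusters according to that one vertex's label. My proposed resolution is to lean on the two complementary bounds above: the deterministic estimate $f \le km$ already certifies $\mathrm{cost}=O(m)$ unconditionally for every instance, while the McDiarmid estimate upgrades this to concentration about the mean in the bounded- or balanced-degree regime — precisely the regime the random-instance model of the next section is engineered to produce, which is where the high-probability guarantee is actually needed.
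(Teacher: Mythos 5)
Your treatment of the expectation is exactly the paper's: replace the minimum over $L(e)$ by an arbitrary fixed representative $l_e$, apply linearity over the $km$ coordinate comparisons, and conclude $\E[C(S)] \le (1-\tfrac{1}{\tau})km = O(m)$ for fixed $k$. For the high-probability half you take a genuinely different route, and the difference matters. The paper asserts that $C(S)$ is a sum of i.i.d.\ $0$--$1$ random variables and applies a multiplicative Chernoff bound, $\pr[C(S) \ge (1+\delta)(1-\tfrac{1}{\tau})km] \le e^{-\delta^2(1-\frac{1}{\tau})km/3}$. That assertion is not literally justified for an arbitrary instance: the per-edge costs take values in $\{0,\dots,k\}$, and edges sharing a vertex yield dependent summands --- precisely the objection you raise, and your high-degree-vertex example correctly shows that sharp multiplicative concentration about the mean can fail in the worst case (near-duplicate edges give the same effect). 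Your replacement is two-pronged: the deterministic bound $C(S) \le km$ already certifies the stated conclusion ``$O(m)$ with high probability'' unconditionally, and McDiarmid with bounded differences $\deg_H(v)$ recovers genuine concentration about the mean whenever the maximum degree is $o(m/\log n)$ --- a regime that does cover the random-instance model used in the next section. So your argument proves the lemma as stated and is more careful than the paper's where the paper's own concentration step is shakiest; what it gives up is the clean instance-independent tail bound around the mean that the paper claims but does not actually earn.
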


\begin{proof}
Suppose $S$ is the solution selected at random.
Let,

\begin{align*}
    C(S) = \sum_{e \in E} \left ( \min_{l \in L(e)} d(S,l)  \right)  
\end{align*}
Then the expected cost,

\begin{align*}
   \ept[ C(S)] &= \mathbb{E}[\sum_{e \in E} \left ( \min_{l \in L(e)} d(S,l)  \right)] =  \sum_{e \in E}  \ept[\min_{l \in L(e)} d(S,l)  ]\\
   &\le \sum_{e \in E}  \ept[d(S,l)]
\end{align*}
The last inequality follows from the fact that $\min(x_1,\ldots) \le x_1$ and considering an arbitrary $l \in L(e)$ for each edge $e$. 
Now we can easily compute the expected value using the indicator random variable method. 
For any $v \in e$ let $I_{X[v]\ne l[v]}$ be the event that $v$ is labeled differently between $X$ and $l$.
Then,

\begin{align*}
   \ept[ C(S)]  &\le \sum_{e \in E} \left ( \sum_{v \in e}\pr[I_{X[v]\ne l[v]}]  \right)  = m \left ( \sum_{v \in e}\pr[I_{X[v]\ne l[v]}]  \right) \\
   & = \left(1-\frac{1}{\tau}\right)km = O(m)\ \mbox{when $k$ is fixed.}
\end{align*}
Since $C(S)$ is a sum of $|H|$ i.i.d $0$-$1$ random variables we can apply Chernoff bound to get a high probability result.
Specifically,
\begin{align*}
   \pr[ C(S) \ge (1+\delta)\left(1-\frac{1}{\tau}\right)km]  \le e^{-\frac{\delta^2\left(1-\frac{1}{\tau}\right)km}{3}}
\end{align*}
where $0 < \delta < 1$. This probability tends to 0 as $n \to \infty$ where we assume $m = \Omega(n)$.
\end{proof}

Although the above result in itself is not that interesting, we will need this to give an upper bound on the approximation ratio when used on a random hypergraph. First we need to define a model for random $k$-uniform hypergraphs that are instances of $\mathsf{CHLP}$.

Let $V^{(k)}$ be the set of all $k$-subsets of $V$. A random $k$-uniform hypergraph $H_{n,m,k}$ is then the pair $(V, E)$ where $E \subset V^{(k)}$ of size $m$ chosen uniformly at random from all possible ${{n \choose k} \choose m}$ such subsets.
Then we choose the labeling constraint $L$ as follows.
Assuming each edge $e \in E$ gets exactly $s$ feasible labels, we select a subset $L(e) \subset [\tau]^k$ of size $s$ uniformly  at random from ${[\tau]^k \choose s}$ such subsets.
This gives us a pair $(H_{m,n,k}, L_{\tau,s})$ which behaves uniformly on every labeling $X \in [\tau]^k$ of the vertices. 
Let,

\begin{align*}
   C(H_{m,n,k}, L_{\tau,s}) =  \min_{X \in [\tau]^V} \sum_{e \in E} \left ( \min_{l \in L_{\tau,s}(e)} d(X,l)  \right)    
\end{align*}

be the minimum cost of labeling $H_{m,n,k}$. Due to the minimum at the front it is difficult to determine the expected cost  $\mathbb{E}[C(H_{m,n,k}, L_{\tau,s})]$ over the randomness of the pair $(H_{m,n,k}, L_{\tau,s})$.
However, for the purpose of bounding the approximation ratio of the randomized algorithm we only need to give lower bound of  $C(H_{m,n,k}, L_{\tau,s})$ with high probability.

\begin{lemma}\label{lmm: lower bound}
For some non-negative $t > 0$, $\pr[C(H_{m,n,k}, L_{\tau,s}) > t] = 1-o(1)$ if $m = \Omega(n)$.
\end{lemma}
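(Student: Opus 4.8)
The plan is to prove this lower bound by a first-moment (union bound) argument over all $\tau^n$ candidate labelings, exploiting the fact that for a \emph{fixed} labeling the cost decomposes into independent per-edge contributions. Concretely, fix any $X \in [\tau]^V$ and condition on the edge set $E$. For each edge $e$ the contribution is $Z_e = \min_{l \in L(e)} d(X|_e, l)$, where $X|_e \in [\tau]^k$ is the restriction of $X$ to $e$. Because the constraint sets $L(e)$ are drawn independently across edges, the variables $\{Z_e\}_{e \in E}$ are independent once $E$ is fixed, and $C_X \defeq \sum_{e} Z_e$.

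First I would observe that the relevant distribution does not depend on $X$ at all. By the symmetry of the Hamming space --- for any fixed $a \in [\tau]^k$ the number of points at distance $j$ from $a$ equals $\binom{k}{j}(\tau-1)^j$, independent of $a$ --- the event $\{X|_e \in L(e)\}$ has probability exactly $s/\tau^k$ for every edge, regardless of $X$. To obtain a clean lower tail I would discard the magnitude of each $Z_e$ and keep only the indicator $W_e = \mathbf{1}[Z_e \ge 1] = \mathbf{1}[X|_e \notin L(e)]$, so that $C_X \ge B_X \defeq \sum_{e} W_e$, where the $W_e$ are i.i.d. Bernoulli$(p)$ with $p = 1 - s/\tau^k > 0$ (assuming $s < \tau^k$). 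Hence $B_X$ is Binomial$(m,p)$ with mean $pm = \Theta(m)$.

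Next I would apply the multiplicative Chernoff lower-tail bound to the fixed labeling: for any $0 < \delta < 1$, $\pr[B_X \le (1-\delta)pm] \le e^{-\delta^2 pm/2}$. Setting $t = (1-\delta)pm$ and union-bounding over all $\tau^n$ labelings (using $C_X \ge B_X$, so $\{C_X \le t\} \subseteq \{B_X \le t\}$) yields $\pr[\min_X C_X \le t] \le \tau^n e^{-\delta^2 pm/2}$. Whenever $m \ge cn$ for a constant $c$ large enough that $\delta^2 pm/2 - n\ln\tau \to \infty$, the right-hand side is $o(1)$, giving $\pr[C(H_{m,n,k}, L_{\tau,s}) > t] = 1 - o(1)$ with $t = \Theta(m) = \Theta(n)$. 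Note that this argument only uses the randomness of $L$, so the bound in fact holds for every fixed $k$-uniform $H$ on $m$ edges, not merely for random $H$.

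The main obstacle is calibrating the union bound: the Chernoff exponent $\delta^2 pm/2$ must dominate the entropy $n\ln\tau$ of the labeling space, which is precisely why the hypothesis $m = \Omega(n)$ with a sufficiently large hidden constant is needed, and why the threshold $t$ must be taken proportional to $m$ rather than to an absolute constant. A secondary point to verify carefully is the independence and identical distribution of the $W_e$ --- in particular that conditioning on the random edge set does not disturb it --- together with the standing assumption $s < \tau^k$, so that $p$ is bounded away from $0$; if $s = \tau^k$ every edge is trivially satisfiable and the lower bound degenerates.
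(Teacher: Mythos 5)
Your proof is correct and follows the same overall strategy as the paper's: a union bound over all $\tau^n$ labelings, a reduction of each edge's cost to an i.i.d.\ Bernoulli indicator that lower-bounds it, and a multiplicative Chernoff lower tail, with the exponent then compared against $n\ln\tau$. The one substantive difference is the choice of indicator. The paper truncates from above, using $Z'_e=\mathbf{1}[Z_e=k]$ (the edge pays the maximal cost $k$), with success probability computed as $\left(1-\tfrac{1}{\tau}\right)^{ks}$ and threshold $t=(1-\delta')km\mu$; you truncate from below, using $W_e=\mathbf{1}[Z_e\ge 1]=\mathbf{1}[X|_e\notin L(e)]$ with probability exactly $1-s/\tau^k$. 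Your choice buys a few things: the probability $1-s/\tau^k$ is exact, whereas the paper's step $\pr[\min_{l}d(X,l)=k]=(\pr[d(X,l)=k])^s$ implicitly treats the $s$ labels of $L(e)$ as independent even though they form a uniformly random $s$-subset (a negative-correlation argument is needed there); for many parameter ranges your threshold $t=(1-\delta)(1-s/\tau^k)m$ is larger (e.g.\ roughly $7m/8$ versus $3m/8$ for $\tau=2$, $k=3$, $s=1$), which improves the approximation ratio in the downstream theorem; and you replace the paper's equality $\pr[C^*>t]=(1-\pr[Y\le t])^{\tau^n}$ (which wrongly treats the $Y_X$ as independent across $X$) with an honest union bound, which is all the paper actually uses. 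You also make explicit two points the paper glosses over: the hidden constant in $m=\Omega(n)$ must be large enough for the Chernoff exponent to dominate $n\ln\tau$, and the argument uses only the randomness of $L$, so the lower bound holds for every fixed $k$-uniform hypergraph with $m$ edges, not just random ones. Your caveat about $s=\tau^k$ is well taken and applies to the paper's version too (there $\mu>0$ always, but the resulting $t$ can be made as small as one likes by taking $s$ large, reflecting the same degeneracy).
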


\begin{proof}
For brevity let $C^* = C(H_{m,n,k}, L_{\tau,s})$. We lower bound the probability $\pr[C^* > t]$.
Let $$Y_X =  \sum_{e \in E} \left ( \min_{l \in L_{\tau,s}(e)} d(X,l)  \right)$$ for each labeling $X \in [\tau]^k$.
Due to the way we have constructed $(H_{m,n,k}, L_{\tau,s})$, $Y_X$'s are i.i.d random variables. Let $Y$ be a r.v. with the same distribution as the $Y_X$'s.
Then,

\begin{align}\label{eq: total cost}
   \pr[C^* > t] =  \left(1- \pr[Y \le t] \right)^{\tau^n} 
\end{align}
Let $Z_e = \min_{l \in L_{\tau,s}(e)} d(X,l)$ for each edge $e \in E$.
Note that $Z_e$'s are i.i.d. and we use the sequence $(Z_1,\ldots,Z_m)$ to enumerate them. 
Note that $Z_i$'s take values between 0 and $k$.
Let $Y = \sum_{i}^m Z_i$ and $Y' = \sum_{i}^m Z'_i$ where 

\begin{align}\label{eq: z}
   Z_i' = \begin{cases}
               0               & \text{if}\ Z_i < k\\
               1               & \text{otherwise}
           \end{cases} 
\end{align}

Then $Z'_i \in \{0,1\}$ for all $i$ and i.i.d.We use $Z'$ to denote an arbitrary $Z'_i$.
According to the above definition $\pr[Y < t] \le \pr[Y' < t/k]$ as the event $[Y < t]$ implies that there are $< t/k$ values of $i$ for which $Z_i = k$. 
$Y'$ is a sum of i.i.d random variables in $\{0,1\}$ and we use Chernoff bound to derive an upper bound on the probability $\pr[Y < t]$ based on the expected value $\mu$ of $Z'$.
For some $0 < \delta' < 1$ we have,

\begin{align*}
    \pr[Y' \le (1-\delta')m\mu] \le e^{-\frac{\delta'^2m\mu}{2}}
\end{align*}
where $\ept[Y'] = m \mu$.
Then,
\begin{align*}
    \pr[Y \le (1-\delta')km\mu] \le e^{-\frac{\delta'^2m\mu}{2}}
\end{align*}
Taking $t = (1-\delta')km\mu$ in Eq.~\ref{eq: total cost}
we get

\begin{align}\label{eq: C}
   \pr[C^* \ge (1-\delta')km\mu] \ge  \left(1- \tau^ne^{-\frac{\delta'^2m\mu}{2}} \right) = 1- e^{n \ln \tau - \frac{\delta'^2m\mu}{2}}
\end{align}
Now we determine $\mu = \pr[Z'=1]$.
According to our definition in Eq.~\ref{eq: z} if $Z'= 1$ then  $\min_{l \in L_{\tau,s}(e)} d(X,l) = k$. Hence,

\begin{align*}
    \pr[Z'=1] &=  \pr \left[\min_{l \in L_{\tau,s}(e)} d(X,l) = k\right] \\ &= (\pr[d(X,l) = k])^s = \left(1-\frac{1}{\tau}\right)^{ks}
\end{align*}
Let,

\begin{align*}
    f(n) =  n \ln \tau - \frac{\delta'^2m\left(1-\frac{1}{\tau}\right)^{ks}}{2}
\end{align*}
 which is the exponent in the RHS of Eq.~\ref{eq: C}. Since $\tau, \delta', s$ and $k$ are all bounded, for some constant $\beta > 0$ we have $f(n) < -\beta n$ whenever $m = \Omega(n)$. This proves the lemma.

\ifx false

\begin{align*}
    \pr[Z = a] &\\ &= \left(1-{k \choose a}\tau^{-a}(1-\frac{1}{\tau})^{k-a}\right)^s \\&-\left(1-{k \choose a-1}\tau^{-a+1}(1-\frac{1}{\tau})^{k-a+1}\right)^s \\
    & = f(k,s, \tau,a)
\end{align*}
\fi

\end{proof}

Now we use Lemma~\ref{lmm: upper bound} and Lemma~\ref{lmm: lower bound} to prove our main result of this section.

\begin{theorem}
There is a randomized algorithm that with high probability has a bounded approximation ratio, which only depends on $k,s,\tau$, for the class or random hypergraphs $H_{m,n,k}$ with random feasibility constraints $ L_{\tau,s}$.
\end{theorem}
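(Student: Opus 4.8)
The plan is to sandwich the approximation ratio of the uniform random labeling between the two bounds just established. The algorithm in question is exactly the one analyzed in Lemma~\ref{lmm: upper bound}: assign each vertex an independent uniform label in $[\tau]$. I would compare its output cost $C(S)$ against the true optimum $C^* = C(H_{m,n,k}, L_{\tau,s})$ and show that $C(S)/C^*$ is, with high probability, at most a quantity that does not involve $n$ or $m$.

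First I would invoke Lemma~\ref{lmm: upper bound}, whose Chernoff tail gives that for any fixed $0 < \delta < 1$,
\begin{align*}
   C(S) \le (1+\delta)\left(1-\frac{1}{\tau}\right)km
\end{align*}
with probability $1 - o(1)$, since the failure probability decays like $e^{-\Theta(m)}$ under $m = \Omega(n)$. Next I would invoke Lemma~\ref{lmm: lower bound}, specializing its conclusion to $t = (1-\delta')km\mu$ with the value $\mu = (1-1/\tau)^{ks}$ computed there, which yields
\begin{align*}
   C^* \ge (1-\delta')\left(1-\frac{1}{\tau}\right)^{ks}km
\end{align*}
with probability $1 - e^{f(n)}$ where $f(n) < -\beta n$; this is again $1 - o(1)$ under $m = \Omega(n)$. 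A union bound over the two failure events then shows that both inequalities hold simultaneously with probability $1 - o(1)$.

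On that event the common $km$ factors cancel and the approximation ratio satisfies
\begin{align*}
   \frac{C(S)}{C^*} \le \frac{1+\delta}{1-\delta'}\left(1-\frac{1}{\tau}\right)^{1-ks},
\end{align*}
which depends only on $k$, $s$, $\tau$ together with the freely chosen constants $\delta, \delta'$. Fixing, say, $\delta = \delta' = 1/2$ produces an explicit bounded ratio and completes the argument.

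The only points requiring care are (i) verifying that both high-probability statements are governed by the same regime $m = \Omega(n)$, so that the union bound is legitimate and the final failure probability is genuinely $o(1)$, and (ii) matching the lower-bound constant $\mu$ exactly to the value derived in Lemma~\ref{lmm: lower bound} so that the cancellation of $km$ is clean. Neither is a real obstacle: the two lemmas were tailored for precisely this combination, and the conceptual difficulty — showing the optimum cannot dip far below the expected random cost — has already been absorbed into the lower-bound lemma. I therefore expect this theorem to be essentially a bookkeeping step that assembles the two preceding estimates.
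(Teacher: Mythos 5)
Your proposal is correct and follows essentially the same route as the paper: both combine the high-probability upper tail from Lemma~\ref{lmm: upper bound} with the high-probability lower bound $C^* \ge (1-\delta')km\mu$, $\mu = (1-1/\tau)^{ks}$, from Lemma~\ref{lmm: lower bound}, and your ratio $\frac{1+\delta}{1-\delta'}\left(1-\frac{1}{\tau}\right)^{1-ks}$ is exactly the paper's $\frac{(1+\delta)(1-1/\tau)}{(1-\delta')\mu}$. The only cosmetic difference is that you use a union bound over the two failure events where the paper instead observes that the two events are independent (since the upper-bound lemma holds for any fixed hypergraph); both justifications are valid.
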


\begin{proof}
The upper tail bound of the randomized algorithm described in Lemma~\ref{lmm: upper bound} applies to any hypergraph, not necessarily random. Hence the high probability results of Lemma~\ref{lmm: upper bound} and Lemma~\ref{lmm: lower bound} are independent. They jointly hold with high probability.
The approximation ratio is $$\le 
\frac{(1+\delta)(1-1/\tau)}{(1-\delta')\mu}$$ which is function of $k,s,\tau$ only for a specific choice of $\delta, \delta'$.
\end{proof}

In the figure~\ref{fig: hist} below we plot the histogram of the cost function $C(\cdot)$ for an $(H,L)$ pair sampled according to our random hypergraph model. The plot supports the theorem; showing the cost is distributed over a somewhat narrow range.  

\begin{figure}[h]
\centerline{\includegraphics[scale=0.3]{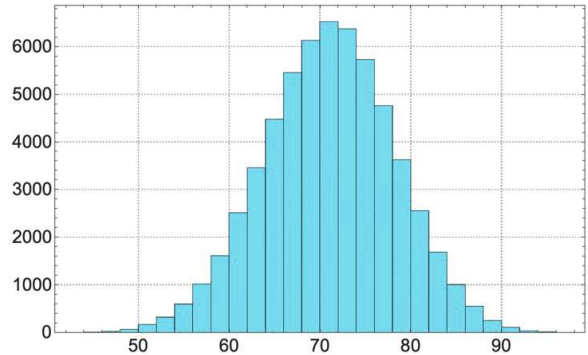}}
\caption{A histogram of $C(\cdot)$ for an instance of $(H_{50,10,3}, L_{3,3})$}
\label{fig: hist}
\end{figure} 
\section{A Greedy Algorithm}
Based on the hypergraph labeling framework introduced in the previous section, we present a greedy algorithm for a more realistic version of the tiling problem.
Essentially, using the greedy heuristic we partition the search space  and iteratively solving the problem  on these subspaces via exhaustive search.
Experimental evaluations are presented in section 6.
 
In this approach, we do not discard the dependency information available in the computation DAG. We use it to develop a greedy order that we use to choose how we process the expressions (edges).
Let $H(V, E)$ be the input hypergraph corresponding to the user program. As before, $V$ constitutes the set of matrices, and $E$ are the edges corresponding to the expressions.
Additionally, we are also given the computation DAG $G$ on the vertex set $V$.
$G$ induces a partial order on $V$. This, in turn, induces a partial order on the edges in $E$.
For two edges $e, e' \in E$, we define $e >_G e'$ if and only if $\exists (A, B) \in e \times e'$ such that 
there is a directed path in $G$ from $A$ to $B$.
Since $G$ is acyclic, if there is a $A\leadsto B$ path, then there can be no $C\leadsto D$ path such that $(C,D) \in e'\times e$.  
Hence the partial order is well defined.
The cost of a tiling operation is defined exactly as before.
For each edge $e$, the cost $C(e, S)$ for the tiling S is the minimum cost over all feasible tiling sets. 
To make our formulation robust, we also allow a weight function $w$ over the edges that enable accounting for things like multiple executions of the expression (inside a loop), unequal sizes of the matrices, and computational complexity of the expression, etc.
Lastly, we do not assume $H$ to be uniform, nor we set any constraint on $s$, the size of the feasibility set.
In summary, the input to our greedy algorithm is the tuple $(H(V,E), G, L, w)$. Recall that $L$ is the feasibility constraints imposed by the algorithms implementing a particular operator. We use $\mathsf{TP}(H(V,E), G, L, w)$ to denote this tiling problem. We will omit some or all of the terms inside the parentheses for notational clarity whenever the meaning is clear.

Now we are ready to describe our algorithm, which has several parts.
First, we decompose $G$ into connected components.
We can solve these components independently of each other.
This is the preprocessing step which is given in algorithm~\ref{alg: hyper}. 

\begin{breakablealgorithm}
\caption{Preprocess $\mathsf{TP}(H, G)$}
\begin{algorithmic}[1]

\STATE Find all the connected components in $G$. Let $\mathcal{C} = \{G_1, \ldots, G_k\}$ be this set. \\
\STATE $ i \leftarrow 0$
\WHILE{$i \le k$}
\STATE $\mathsf{greedy\mbox{-}solver}(\mathsf{TP}(H[G_i], G_i))$\\
\COMMENT{Here $H[G_i]$ is the induced sub-hypergraph with the same vertex set as $G_i$.}
\STATE $i \leftarrow i+1$
\ENDWHILE\\


\end{algorithmic}
\label{alg: hyper}
\end{breakablealgorithm}

Then we process each component independently based on its size. If the size (number of variables + expressions) is ``small" then we compute an optimal tiling by an exhaustive search.
Otherwise, we take a greedy approach.

Henceforth we assume $G$ is connected. 
Let $(L_1,\ldots, L_p)$ be the level sets of the poset $(E, >_G)$ in non-decreasing order of dependency.
That is, expressions in $L_1$ are computed directly from the input matrices and does not depend on other layers, expressions in $L_2$ depend on $L_1$, and so on.
Algorithm~\ref{alg: greedy solver} below describes the outer structure of our algorithm.

\begin{breakablealgorithm}
\caption{ $\mathsf{greedy\mbox{-}solver}(\mathsf{TP}(H, G))$}
\begin{algorithmic}[1]
\STATE {\bf Parameters:} size threshold $\alpha$ 
\STATE Compute the layered decomposition of $E$.
Let ${\cal L} = (L_1,\ldots,L_p)$ be the layers based on this decomposition.\\
\COMMENT{Let ${\cal S}$ is the space of all tilings.}

\IF{$|{\cal S}| \le \alpha$} 
\STATE Perform an exhaustive search on the space  ${\cal S}$
\RETURN {$\mbox{argmin}_{S \in {\cal S}}\sum_{e \in E} w(e)C(e, S) $}

\COMMENT{Otherwise ${\cal S}$ is big and we use a greedy approach.}
\ELSE
\RETURN $\mathsf{inner\mbox{-}greedy\mbox{-}solver}(\mathsf{TP}(H, G),{\cal L})$
\ENDIF

\end{algorithmic}
\label{alg: greedy solver}
\end{breakablealgorithm}
Next we describe the inner solver which uses our greedy method. Initially we start with the full set of edges. The algorithm calls a subroutine that produces a greedy ordering of the remaining set of edges. From this ordering we choose the first $\le \beta$ (a tunable parameter) edges to process. We compute the optimal tiling of the sub-hypergraph induced by these edges and we remove these before the start of the next iteration.
Additionally we maintain a set $\hat{S}$ which stores the vertices that have been already tiled. 
If $\hat{S}$ is  non-empty then the optimal tiling is computed while fixing the tiling of vertices as given in $\hat{S}$.
Here we abuse the notation $\hat{S}$ to indicate both the set of vertices which are tiled as well as the partial tiling.

\begin{breakablealgorithm}
\caption{ $\mathsf{inner\mbox{-}greedy\mbox{-}solver}(\mathsf{TP}(H, G),{\cal L})$}
\begin{algorithmic}[1]
\STATE Let $E' \leftarrow E$\\
\COMMENT{ Initially all vertices in $V$ are un-tiled. Let $\hat{S}$ be a partial tiling of $V$.}
\STATE  Set $\hat{S} \leftarrow \emptyset$
\WHILE{$V\setminus \hat{S} \ne \emptyset$}
\STATE $E''\leftarrow\mathsf{compute\mbox{-}greedy\mbox{-}order}(V, E', \hat{S}, {\cal L})$\\
\COMMENT{${E}''$ are the set of edges in the first bucket based on their cumulative weights.}

\STATE Choose an optimal tiling $S_{E''}$ for the set of vertices in $\bigcup_{e \in E''}e$.
\STATE $\hat{S} \leftarrow \hat{S} \cup S_{E''}$
\STATE $E'\leftarrow E' \setminus E''$
\ENDWHILE\\
\RETURN $\hat{S}$
\end{algorithmic}
\label{alg: solver2}
\end{breakablealgorithm}

Finally, in the following (algorithm~\ref{alg: greedy order}) we describe the procedure to compute the greedy order. 
The algorithm uses a few parameters that can be tuned experimentally. In line 5 we use a new notation $cov(e)$ which is the set of all edges that $e$ is the cover of in the partial order $(E, <_G)$.
Informally, these are the set of expressions which directly depend on the result of the expression $e$. 
Line 11-18 simply choose an appropriate subset of edges based on the greedy order.
A higher $\gamma$-value indicates that the tiling of the vertices in the edge has a bigger influence on the overall solution cost so we should proceed to tile these vertices first. Results in section 6 support this intuition.

\begin{breakablealgorithm}
\caption{ $\mathsf{compute\mbox{-}greedy\mbox{-}order}(V, E', \hat{S}, {\cal L})$}
\begin{algorithmic}[1]
\STATE {\bf Parameters:} $\beta$ for the bucket size, $\eta$ is the weight ratio.
\STATE For each expression $e \in L_p$ compute  $\gamma(e) \leftarrow w(e)\min_{S \in {\cal S}_{|\hat{S}}} C(e, S)$. 

\COMMENT{Next we compute $\gamma(\cdot)$ for all other expression going up layer-wise. Here ${\cal S}_{|\hat{S}}$ is the remaining search space conditioned on $\hat S$.}

\FOR{$i$ from $p-1$ down to $1$}
\FOR{$e \in L_i$}
\STATE $\gamma(e) \leftarrow \min_{S \in w(e){\cal S}_{|\hat{S}}} C(e, S) + \sum_{f \in cov(e)}\gamma(f)$ \\
\ENDFOR
\ENDFOR
\STATE $i \leftarrow 0$
\STATE Sort $E'$ in descending order based on the $\gamma$ values.
\COMMENT{We process each expressions according to this order. We are abusing the notation $E'$ to indicate both a set and an indexed array.}
\STATE $E'' \leftarrow \emptyset$
\WHILE{$|E''| \le \beta$ }
\IF{$\gamma(E'[i]) \ge \eta \gamma(E'[0])$}
\STATE $E'' \leftarrow E'' \cup E'[i]$
\ELSE
\RETURN $E''$
\ENDIF
\STATE $i \leftarrow i+1$
\ENDWHILE
\RETURN $E''$
\end{algorithmic}
\label{alg: greedy order}
\end{breakablealgorithm}

\subsection{Running Time Analysis}
It is easy to see that our greedy algorithm has a polynomial running time in the number of vertices (matrices) $n$.
Here we give a detailed analysis.
In Algorithm ~\ref{alg: hyper} we find the connected components of the computation graph. This takes $O(n+ m)$ times. Note that each expression has a bounded number of matrices, hence number of edges in $G$ is of $O(m)$.
The exhaustive search is performed only if the size of the search space is bounded, hence we can ignore this case in our analysis (line 3-5 in Algorithm ~\ref{alg: greedy solver}).
Now we turn to Algorithm ~\ref{alg: solver2}.
At each iteration of the while loop size of $\hat{S}$ increases by at least 1.
Hence we iterate at most $O(n)$ times.
Choosing an optimal tiling at line 5 costs $O(\tau^\beta) = O(1)$, since $\tau$ and $\beta$ are assumed to be bounded.
Rest of the operations (set union and difference) can be carried out by any off-the-shelf disjoint set data stricture in total $O(n \alpha(n))$ times, where $\alpha(n)$ is the inverse of Ackerman-type function.
This is for all practical purpose we can assume to be linear.
Only things remain is to determine the cost of computing the greedy order in line 4.
So we turn our attention to Algorithm~\ref{alg: greedy order}.
Clearly cost incurred in line 2 is $O(|L_p|)$.
Now let us look at the double-for loops between line 3-7.
The $\gamma(e)$ value is calculated for each edge exactly once. At line 5, computing the sum of cover takes $O(|cov(e)|)$ time.
Hence summing over all $\gamma(e)$ calculations including that in line 2 we get total run-time of all the instructions upto line 7 is $O(\sum_{e} |cov(e)|) = O(m)$ by the argument we made previously. 
Sorting $E'$ in line 7 costs $O(m \log m)$ and the operations on line 11-18 takes constant $(\alpha(n))$ time, since $\beta, \eta$ are bounded, which is dominated by the cost we incur before line 11.
Hence Algorithm ~\ref{alg: solver2} for has a running time of $O(m \log m)$.
Combining this with the previous analysis of Algorithm~\ref{alg: greedy solver} we see that the total runtime of our greedy solver is $O(nm\log m)$.
\section{Experimental Results}
We implemented the greedy algorithm in Python 
\footnote{https://github.com/folshost/TilingSolver} to facilitate its use in Python-based distributed processing API's which have shown significant growth throughout the past few years.
For the experiments, we chose to investigate the performance of our algorithm on a mix of modifications on known algorithms and random algorithms.
We chose these modified algorithms due to the limited number of supported expressions in our implementation of the greedy algorithm.
For known algorithms, we chose an approximate Linear Regression program, an approximate PCA with a 3-round power method for eigenvector determination, a bi-directional power set series of multiplications, and two random programs.
Although our implementation was sensitive to matrix size as a factor in cost calculations, we chose to leave all matrices used in our simulated programs the same size to simplify testing.

We compared our results between the three algorithms: 1) a local solver, 2) exhaustive search, and 3) our greedy algorithm described in the previous section.
According to the execution order specified by the computation DAG, the local solver chooses locally optimal tilings for each expression in a single forward pass of the program. 
The exhaustive search enumerates all possible tilings and finds the lowest cost available. 
To obtain improvements in computation time (i.e., to make exhaustive search tractable), we restricted our expressions to one implementation for each algorithm.

The greedy search algorithm has three configurable parameters.
All of the experimental data was collected using $\alpha=10$.
However, we chose to vary $\beta$ and $\eta$ in a grid search to investigate the effects of these parameters.
Figure \ref{fig11: beta_vs_eta} shows the effects of this variation.
The figure is an average across our five test programs of the max-normalized times yielded by the grid search.
That is, for each program, the grid search yielded a number of times, which were then normalized against the maximum amount of time required for that program.
To aggregate across the five programs, we averaged the five generated grids on an element-wise basis to characterize the effects of $\beta$ and $\eta$ across all programs, giving equal weight to all programs.
The greedy search algorithm gave equal scores for the solutions derived, irrespective of the parameterization of $\beta$ or $\eta$.

From figure \ref{fig11: beta_vs_eta} we can see a significant effect of the $\beta$ (bucket size) parameter, while it would not appear that there is a significant effect due to $\eta$ (weight ratio).

\begin{table} [h!]
\centering
\begin{tabular}{ |c|c|c|c| }
\hline
Test & Local & E Search & Greedy \\ [0.5ex]
\hline
Linear Regression & 0.0012 &    0.3443 & 0.019366   \\
\hline
Parallel PCA      & 0.0020 &  605.7924 & 0.042917   \\
\hline
Power Set         & 0.0042 & 1275.4500 & 0.042180   \\
\hline
Random 1          & 0.0041 &    1.0372 & 0.029209   \\
\hline
Random 2          & 0.0014 &    0.0517 & 0.023032   \\
\hline
\end{tabular}
\caption{Times for search execution (in seconds)}
\label{table:2}
\end{table}

\begin{figure}[ht]
\centerline{\includegraphics[scale=0.5]{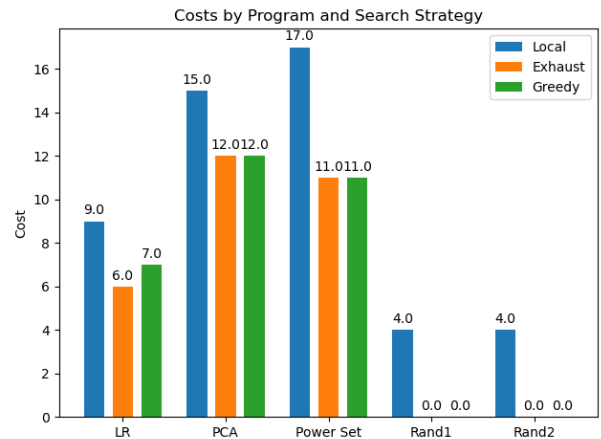}}
\caption{Each program has three searches performed for it}
\label{fig10: tiling}
\end{figure} 

\begin{figure}[ht]
\centerline{\includegraphics[scale=0.5]{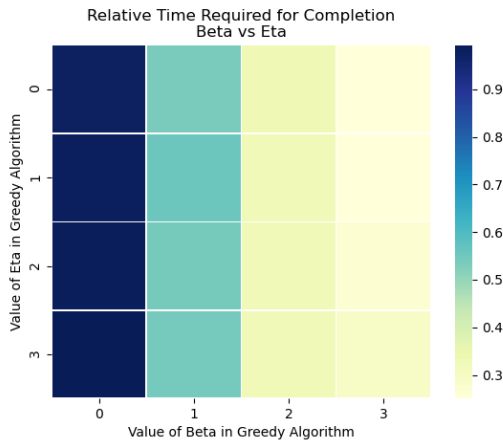}}
\caption{Effects on search time of Beta and Eta parameters}
\label{fig11: beta_vs_eta}
\end{figure} 

In figure \ref{fig10: tiling} we can see that in all cases but one, our algorithm matched the solution found by exhaustive search, and in all cases did better than the local solver. 
Table \ref{table:2} shows us that our algorithm in most cases is comparable in timing to the local solver operation, while the exhaustive search is almost always more than an order of magnitude slower.
For our tests then, our greedy algorithm was generally successful in obtaining the benefits of both of the competing algorithms while retaining none of their drawbacks.

As we noted previously, one of our goals is to make the greedy algorithm work in the distributed HPC environment.
Since this tiling solver would run for every program executed in that environment, it must allow running at such a speed that it does not significantly impact the actual user program's total run time. 
With this in mind, although the exhaustive search would yield good results, in certain cases, the number of involved matrices and operations in a user program could result in a lengthy solve process, as typified by the Parallel PCA and Power Set Multiplication programs. 
All of the programs we used were less than 18 matrix operations. 
For comparison, SOTA deep neural networks like BERT\cite{bert} or GPT-3\cite{brown2020language} often use a large number of layers (in large BERT's case 24 transformer layers, in GPT-3's 96 transformer layers), without taking into consideration any preprocessing steps for data, meaning user programs for investigation could substantially exceed the size of our experimental programs.
We also kept our search space small with the number of implementations of edges in these experiments.
With these facts in mind, in certain circumstances, it could become prohibitive, even with further parallelization of the search process (on server processors or across multiple nodes), to exhaustively search the entirety of that space.

\subsection{Hardware}
For these experiments, we ran all of the tests as multi-threaded processes on an i5-8600k, with a base clock of 3.6 GHz, running in Python 3.6.8 and using Numpy 1.18.1.

\section{A Memory Occupancy Problem}

Here we take a digression and discuss an interesting problem arising out of our tiling optimization study.
Let $G$ be the program DAG and  $H(V,E)$ is the corresponding hypergraph as introduced previously.
Here we consider an optimization of the memory storage by reordering the expressions consistent with the partial order $P$ induced by $G$ on the set $E$ of expression.
{\it Lifespan} of a matrix is the interval starting from the first time it appeared in an expression to the last time. If the matrix is one of the outputs (we denote the set of outputs as $O \subset V$, here  $O$ can be empty if the output of the program is a scalar) of the program then it must be kept in memory at least until the last expression is executed.
The number of matrices that must be kept simultaneously in memory (where the lifespans overlap) depends on the order in which the expressions are executed.
Our goal is to minimize the maximum memory load during the execution of the user program. We will show that this problem is NP hard by a reduction from the cut-width problem. 

We continue to define some more terms.
Let $L$ be a linear extension of $P$. 
For every matrix $ A \not \in O$ let $s_A$ and $t_A$ be the first and the last expression in the ordering $L$ that the matrix $A$ was involved in.
If $A \in O$ then we associate with $A$ the interval $[s_A, m]$ ($m$ is the number of expression).
This forces us to keep the output matrix to stay in memory after it has been computed.
We can create an interval graph $I_{L}$ based on the intervals $[s_A, t_A]$ corresponding to the matrices in $V$ for the linear extension $L$.
If two intervals overlaps then the corresponding matrices must be kept in memory together during the execution of overlapping expressions.
The maximum memory needed to execute the program depends on the maximal set of mutually overlapping intervals for a given ordering of the expressions.
In order to reduce the maximum memory consumption we want to choose a linear extension that minimizes the maximum overlapping set of intervals.
Since output matrices must stay in memory after they have been computed, hence we need to hold at least $|O|$ matrices simultaneously, regardless of the order in which they have been computed.
Let $\kappa(I_L)$ be the clique number of $I_L$. Then the decision version of this memory occupancy problem ($\mathsf{MOP}$) is as follows: given $G, H$ and a positive integer $k$ decide if there is a  linear extension $L$ such that  $\kappa(I_L) \le k$.
\ifx false
As an example consider the following program, where we use the notation $(out, \{input\})$ to indicate the set of input and output matrices involved within an expression\footnote{For this problem the matrix operations can be arbitrary.}.  
{\nonumber \begin{align}
 e_1: (A_{1}; \{A_2, A_3\})\\
 e_2: (A_{4}; \{A_3, A_1\})\\
 e_3: (A_{5}; \{A_1, A_2\})\\
 e_4: (A_{6}; \{A_1, A_2, A_4\})
\end{align}}
where $O = \{A_5,A_6\}$. The intervals for two different ordering of the expressions for the above program is shown in Figure ~\ref{fig: mop}. 

\begin{figure}[h]
\centerline{\includegraphics[scale=1.3]{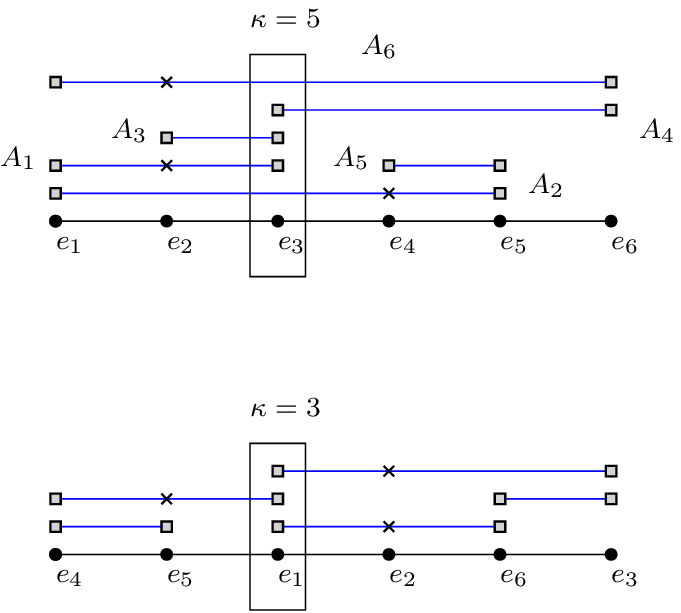}}
\caption{The graph $G_{int}$ corresponding to the program $P_1$. Here the output set is $\{A\}$}
\label{fig: mop}
\end{figure} 
\fi
Next we prove that $\mathsf{MOP}$ is NP-complete.

\begin{theorem}
Deciding whether $\mathsf{MOP}$ has a satisfiable instance for some $k$ is NP-Complete.
\end{theorem}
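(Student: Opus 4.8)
The plan is to first dispatch membership in NP and then give a polynomial-time reduction from the \textbf{cutwidth} (minimum cut linear arrangement) problem, which is NP-complete. Membership is routine: a certificate is a linear extension $L$ of the partial order $P$; given $L$ one builds the interval graph $I_L$ from the spans $[s_A,t_A]$ in linear time and computes $\kappa(I_L)$ by a single left-to-right sweep, since an interval graph's clique number is just the maximum number of intervals covering any one point. Hence checking $\kappa(I_L)\le k$ is polynomial, and $\mathsf{MOP}\in$ NP.

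For hardness, let $(G',k)$ be a cutwidth instance with $G'=(V',E')$ (we may assume $G'$ connected, and, after a standard cutwidth-preserving vertex-splitting step, of bounded degree). I would build a program whose matrices are the \emph{edges} of $G'$ and whose expressions are the \emph{vertices} of $G'$: for each edge $\{u,v\}$ introduce a matrix $M_{uv}$ occurring as an input in exactly the two expressions $e_u$ and $e_v$, and give each expression $e_v$ a private fresh output $D_v\notin O$ that appears nowhere else. Since no $M_{uv}$ is ever a left-hand side, the DAG $G$ carries no forced dependencies, the induced order $P$ is empty, and \emph{every} permutation of $(e_v)_{v\in V'}$ is a valid linear extension. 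Choosing $L$ is therefore exactly choosing a vertex ordering $\pi$ of $G'$, and the span of $M_{uv}$ is the closed interval between the positions of $u$ and $v$.

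The step I expect to be the main obstacle is that $\kappa(I_L)$ counts intervals covering a \emph{point} (an expression), whereas the cutwidth counts edges crossing a \emph{gap} between two consecutive vertices. A short calculation gives, for an ordering $\pi$,
$$\kappa(I_L) \;=\; 1+\max_{j}\Big( w_\pi(j) + \mathrm{backdeg}_\pi(\pi(j)) \Big),$$
where $w_\pi(j)$ is the number of edges crossing the $j$-th gap, the $+1$ is the singleton output $D_{\pi(j)}$, and $\mathrm{backdeg}_\pi(\pi(j))$ counts the already-placed neighbours of $\pi(j)$. So the raw construction computes ``cutwidth plus a back-degree term'' rather than the cutwidth itself, and these genuinely differ (already on a path $P_4$). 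Eliminating this boundary term is the crux of the argument.

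To overcome it I would regularize the incidences so that the additive term becomes a \emph{constant} independent of $\pi$: subdivide each edge of $G'$ (routing every ``real'' crossing through a degree-two connector expression) and pad each original expression with enough dummy single-point matrices so that all expressions carry an identical fixed incidence load. The aim of these gadgets is to force the maximum memory load of an ordering to equal the cutwidth of the corresponding layout plus a constant $c$ computable from $G'$, so that a linear extension with $\kappa(I_L)\le k+c$ exists if and only if $G'$ admits a layout of cutwidth $\le k$. The remaining work is the two-directional correctness check: an optimal cutwidth layout yields a low-load extension, and conversely an optimal extension, after normalizing the placement of the connector and dummy expressions (which can be slid adjacent to their incident vertices without increasing the load), induces a vertex ordering of the same cut value. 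Establishing that this normalization never increases $\kappa(I_L)$ is the delicate part and is where I would concentrate the detailed argument.
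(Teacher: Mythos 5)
Your reduction skeleton is essentially the paper's: reduce from cutwidth, turn each edge of $G'$ into a matrix shared by the two expressions corresponding to its endpoints, and arrange dependencies so that every permutation of the vertex-expressions is a legal linear extension. (The paper additionally wraps the construction with an initial expression $e_{-1}$, per-vertex matrices $C_i$, and a single output expression $e_{n+1}$, which pins the contribution of the non-edge matrices to exactly $n+1$ at every point; this plays the role of your dummy padding.) To your credit, you have put your finger on exactly the step where this reduction is delicate: $\kappa(I_L)$ counts intervals covering a \emph{point}, so an edge whose other endpoint was already placed is still alive at $\pi(j)$, whereas cutwidth counts edges crossing a \emph{gap}; the difference is your ordering-dependent back-degree term. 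The paper's proof does not confront this discrepancy at all --- it asserts that the edge-matrices in memory ``are precisely those that correspond to edges in $G$'' and concludes the equivalence --- so the obstacle you identify is real and is left unaddressed even in the published argument.

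The gap in your proposal is that the repair is only gestured at, and as sketched it does not close. Padding each expression with dummy single-point matrices adds a \emph{fixed} amount to the coverage at its position, but the problematic term $\mathrm{backdeg}_\pi(\pi(j))$ is not fixed: it ranges over $\{0,\ldots,\deg(\pi(j))\}$ depending on $\pi$, so equalizing degrees does not turn the additive term into a constant. Indeed $\mathrm{cov}(j)=\tfrac{1}{2}\bigl(w_\pi(j-1)+w_\pi(j)+\deg(\pi(j))\bigr)$, so even on a $d$-regular instance the quantity your construction minimizes is the maximum \emph{average} of two consecutive cuts plus $d/2$, which in general is strictly between $\mathrm{cw}_\pi/2+d/2$ and $\mathrm{cw}_\pi+d/2$ and need not equal the cutwidth plus a constant; all one gets for free is $\mathrm{cw}(G')\le\min_\pi\max_j \mathrm{cov}(j)\le \mathrm{cw}(G')+\Delta$, an additive-error sandwich rather than the exact equivalence an NP-hardness reduction needs. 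The normalization claim (that connector and dummy expressions can be slid next to their neighbours without increasing $\kappa(I_L)$, and that the resulting vertex ordering has the same cut value) is precisely the two-directional correctness argument the whole proof rests on, and you explicitly leave it open. So as it stands you have NP membership and a reduction that computes a quantity \emph{related} to cutwidth within an additive $O(\Delta)$, but not the biconditional ``$\exists L:\kappa(I_L)\le k+c$ iff $\mathrm{cw}(G')\le k$''; closing it would require either a genuinely worked-out gadget (e.g., scaling the cutwidth up by replacing edges with many parallel subdivided paths so the $O(\Delta)$ slack can be rounded away) or a reduction from a point-based layout parameter rather than cutwidth itself.
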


\begin{proof}

This can be proven by reducing the cut-width problem for simple undirected graphs to our problem.
If vertices of a graph is linearly ordered along a line, then the edges between the points (vertices) on this line forms intervals (see Fig~\ref{fig: cw}).
Let $R$ be some left to right ordering of the vertices. A vertex is denoted by its order from the left. An edge is  represented by an interval $[l,r]$.
We say an interval crosses the $i^{th}$ vertex in $R$ if $i \in [l,r-1]$. Let $\theta_i(R)$ be the number of intervals crossing the $i^{th}$ vertex in $R$.
The cut-width of the ordering $R$ is then $\max_{1\le i\le n} \theta_i(R)$. The \emph{cut-width} of $G$ is minimum cut-width over all possible orderings ($\min_{R \in {\cal S}_n}\max_{1\le i\le n} \theta_i(R)$).
It is known that determining the cut-width of a graph is NP-hard for an arbitrary $k$, but fixed parameter tractable in $k$ (see \cite{thilikos2005cutwidth}).

\begin{figure}[h]
\centerline{\includegraphics[scale=1.]{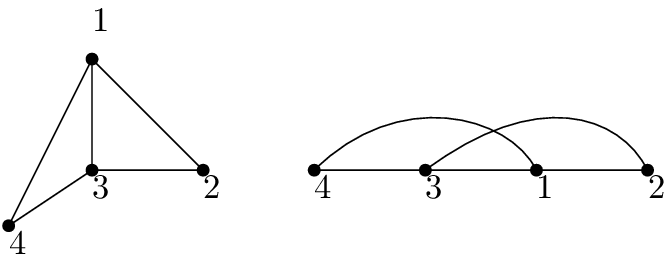}}
\caption{A graph $G$ (left) along with a linear representation (right). Maximum number of edges crossing any vertex is the width, which in this case is 2.}
\label{fig: cw}
\end{figure} 
Given an instance of the cut-width problem (a graph $G(V, E))$ we reduce it to an instance of $\mathsf{MOP}$ in the following way.
Identify a matrix $B_{ij}$ for every edge $(i,j) \in E_v$ and $i, j \in V$ and a matrix $C_i$ for each vertex.
We create an initial expression:
$$e_{-1} : (A_{-1}; \{C_1,\ldots,C_n\})$$
In the above and what follows the matrix before the semicolon is the output matrix and the set of matrices after the semicolon are the input matrices.
For each vertex $i \in V$ create an expression:
$$e_i : (A_i;  \{ B_{ij}\mid j \in N(i)\}\cup \{C_i\}\cup \{A_{-1}\})$$
where $N(i)$ is the set of neighbors of $i$ in $G$.
Now we create an additional expression:
$$e_{n+1} : (A_{n+1} ; \{A_1,\ldots,A_n,A_{-1}\})$$
and set $O = \{A_{n+1}\}$. Let $A = \{A_{-1},A_1,\ldots,A_{n+1}\}$, $B =\{B_{ij}\mid \ (i,j) \in E\} $ and $C = \{C_1,\ldots, C_n\}$.

Let $G'$ be the program DAG determined by the above expressions $\{e_{-1},e_1,\ldots,e_{n+1}\}$ with the vertex set $A\cup B\cup C$.
Next we show that $G'$ has a satisfying instance of size at most $k+n+1$ if and only if $G$ has a cut-width of at most $k$.
Let us prove the only if direction first.
Let $R$ be an ordering of the expressions that produces a satisfying instance of size at most $k+n+1$.
Firstly, In the partial order $P$ induced by $G'$ we have $e_{-1}\succ e_i \succ e_{n+1}$ for all $i \le n$. Further $e_i$ and $e_j$ are incomaparble if $i \ne j$ and $i,j \not \in \{-1,n+1\}$.
Secondly, each expression $e_i$ has a unique (input) matrix $C_i$ associated with it.
The number of matrices from $A \cup C$ that need to be kept in memory just after executing the $i^{th}$ expression  (according to $R$) is exactly $n+1$.
This is independent of the ordering $R$. Hence if $G'$ has a satisfying instance of size at most $n+1+k$ then there are at most $k$ matrices from the set $B$ are kept in memory at any given time. From our construction we see that these are precisely those matrices that corresponds to edges in $G$.
Hence cut-width of $G$ on $R$ is at most $k$.
To prove the other direction assume $G$ has a cut-width of at most $k$ and let $R'$ be an optimal ordering on $V$ ($v_{R'(1)}, \ldots, v_{R'(1)}$). We then extend $R'$ to get an ordering of the expressions $R = (e_{-1},e_{R'(1)}, \ldots, e_{R'(1)},e_{n+1})$ for which we only need to keep at most $k$ matrices from $B$ in memory at any given time. Hence $R$ is an satisfying ordering of size at most $n+1+k$.

\end{proof}

\begin{remark}
In \cite{thilikos2005cutwidth} authors show that the cut-width problem is fixed parameter tractable in $k$. Here we conjecture that $\mathsf{MOP}$ is also fixed parameter tractable if $\min_L \kappa(I_L)$ is bounded. The problem is to determine the cut-width of $G$ when ordering of the vertices are restricted to linear extensions of a given partial order.
\end{remark}
\ifx false
Let $w$ be the width of the partial order $P$. The hardness of $\mathsf{MOP}$ is sensitive to $w$. Consider the two extreme case: 1) when $w = 1$, then $P$ is a total order and the solution is trivial. 2) $w = \Omega(n)$ then $P$ is an anti-chain, and from the proof above we see that $\mathsf{MOP}$ is NP-hard.
This motivates us to come up with an exact algorithm when $w$ is bounded. 
\begin{theorem}
$\mathsf{MOP}$ can be solved exactly in time $O(w^2n)$ when $w$ is bounded.
\end{theorem}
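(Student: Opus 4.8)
The plan is to build the algorithm on one clean structural observation: for any linear extension $L$, the set of matrices that must reside in memory immediately after the prefix of expressions $I\subseteq E$ has been executed depends only on $I$ as a set, not on the internal order in which the expressions of $I$ were run. Concretely, writing $s_A$ and $t_A$ for the first and last expression touching a matrix $A$ (with $t_A$ taken to be the terminal position when $A\in O$), the live set after the prefix $I$ is
\[
\mathrm{live}(I)=\{A\notin O:\ s_A\in I,\ t_A\notin I\}\ \cup\ \{A\in O:\ s_A\in I\},
\]
and the memory load is $\mathrm{load}(I)=|\mathrm{live}(I)|$. Since prefixes of a linear extension are exactly the \emph{order ideals} (down-sets) of the poset $P$, and since the clique number of an interval graph equals the maximum pointwise overlap of its intervals, we have $\kappa(I_L)=\max_j \mathrm{load}(I_j)$ over the saturated chain of ideals $\emptyset=I_0\subsetneq I_1\subsetneq\cdots\subsetneq I_m=E$ induced by $L$. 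The first thing I would do is prove this equality carefully, as it converts $\mathsf{MOP}$ from a statement about orderings into a statement about ideals.

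With that in hand, $\mathsf{MOP}$ becomes a \emph{minimax path} problem: among all maximal chains in the distributive lattice of order ideals of $P$ (each upward step adding a single minimal available expression), find one minimizing the maximum load encountered. Here is where boundedness of the width $w$ enters. By Dilworth's theorem I would decompose $P$ into $w$ chains $C_1,\dots,C_w$, and represent each order ideal by the consistent $w$-tuple $(p_1,\dots,p_w)$ recording how far up each chain it extends. A step of the chain advances one coordinate (executing the next available expression on some chain), provided the resulting tuple is still a consistent ideal; crucially, the births and deaths triggered by that single expression let me update $\mathrm{load}$ incrementally in $O(w)$ time per step. Running a dynamic program over these states — processing ideals in order of cardinality and storing, for each reachable ideal, the smallest peak load by which it can be reached — solves the minimax path exactly and shows $\mathsf{MOP}\in P$ whenever $w$ is a constant.

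The hard part will be squeezing the running time down to the claimed $O(w^2 n)$, i.e. \emph{near-linear}. The naive ideal-based dynamic program above has as many as $\Theta(n^w)$ states (witnessed already by $w$ disjoint chains), which is polynomial for fixed $w$ but far from linear. To reach $O(w^2 n)$ I would instead binary-search on the load threshold $k$ and, for each fixed $k$, decide reachability of $E$ from $\emptyset$ through ideals of load $\le k$ by a greedy frontier expansion that maintains only the current antichain of size $\le w$ and repeatedly executes a minimal available expression that does not push the load above $k$ (preferring expressions that free memory). The whole correctness of this near-linear version rests on an exchange/no-backtracking lemma: that within a fixed threshold $k$, such greedy expansion never needs to undo a choice, so each threshold check touches only $O(wn)$ configurations at $O(w)$ work apiece. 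Establishing that exchange property — and thereby ruling out the $\Theta(n^w)$ blow-up of the generic ideal DP — is the step I expect to be the main obstacle, and it is where I would concentrate the technical effort; absent it, the theorem still holds with a weaker (polynomial, FPT-in-$w$) bound.
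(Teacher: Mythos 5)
You are right to flag the exchange lemma as the crux, and you should know that the paper offers you nothing to check against here: this theorem sits inside a disabled (\texttt{ifx false}) block, and the authors' own proof is an abandoned two-line stub (``We can use dynamic programming here. Let $R_i$ be an optimal ordering of the expression fr''). So the only question is whether your argument stands on its own. Its first two thirds do, modulo a small bookkeeping error: a matrix $A$ is still resident \emph{while} the expression at position $t_A$ executes, so $\kappa(I_L)$ is $\max_j|\{A: s_A\le j\le t_A\}|$, which is not quite $\max_j \mathrm{load}(I_j)$ under your definition (a matrix with $s_A=t_A$ contributes to no $\mathrm{load}(I_j)$). The fix is to attach the load to the step $(I_{j-1},e_j)$ rather than to the ideal $I_j$, counting $A$ live if $E_A\cap I_j\neq\emptyset$ and $E_A\not\subseteq I_{j-1}$; this is still determined by the ideal plus the added element, so your bottleneck-path DP over the distributive lattice of down-sets goes through and yields a correct $m^{O(w)}$-time exact algorithm. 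That is a real, defensible result (XP in $w$), and it is more than the paper establishes.

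The gap is everything beyond that: the claimed $O(w^2 n)$ bound \emph{is} the unproven exchange lemma, and I do not think you should expect it to hold as stated. Your threshold-$k$ greedy ("execute any available expression that keeps the load at most $k$, preferring ones that free memory") can get stuck in a way that a different earlier choice would have avoided: two incomparable available expressions can both keep the load under $k$ now, but commit you to different futures, and "frees the most memory now" is not a monotone proxy for "reachable from here under threshold $k$." This is exactly the failure mode that makes cutwidth-type problems hard and forces the known FPT algorithms for cutwidth to carry a characteristic of all partial layouts rather than a single greedy frontier; restricting the orderings to linear extensions of a width-$w$ poset shrinks the state space to $m^{O(w)}$ ideals but gives no obvious domination argument that collapses it to $O(wn)$ frontiers. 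To rescue the near-linear bound you would need either a proof that, among ideals of equal cardinality reachable under threshold $k$, there is always one that dominates all others (so the greedy can commit), or a counterexample showing the $O(w^2n)$ claim is simply too strong. Absent one of those, the honest statement of the theorem is the polynomial (fixed-$w$) bound your DP actually delivers, and you should present it that way rather than as the $O(w^2 n)$ claim.
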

\begin{proof}
We can use dynamic programming here.
Let $R_i$ be an optimal ordering of the expression fr
\end{proof}
\fi




\bibliographystyle{plain}
\bibliography{ref}

\end{document}